\newtheorem{remark}{Remark}
\newtheorem{fact}{Fact}
\newtheorem{theorem}{Theorem}
\begin{document}

\title{Optimizing On-Line Advertising}

\author{Fabrizio Caruso\\ Neodata Group, Catania, Italy \\ e.mail: \texttt{fabrizio.caruso@neodatagroup.com} \\
\and    Giovanni Giuffrida\\ Dept. of Social Sciences, University of Catania, Italy \\ e.mail: \texttt{ggiuffrida@dmi.unict.it}
       }

\maketitle

\begin{abstract}
We want to find the optimal 
strategy for displaying advertisements
e.g.\ banners, videos, 
in given locations at given times
under some realistic dynamic constraints. 
Our primary goal is to maximize the expected revenue
in a given period of time, i.e.\ the total profit
produced by the impressions, which depends on
profit-generating events such as the impressions themselves,
the ensuing clicks and registrations.
Moreover we must take into consideration
the possibility that the constraints could 
change in time in a way that cannot always be
foreseen.

\medskip
\noindent {\bf Keywords: } web advertisement, linear programming,
data mining, machine learning.

\end{abstract}

\section{Introduction}
We want to find the optimal strategy for displaying (delivering)
advertisements (``creatives'') in order to achieve different goals
(maximum total profit, maximum average profit per advertisement, maximum
visibility of the campaign) under some realistic constraints.

More specifically we need to find the  optimal number of ``impressions'' 
(display of advertisements at a given time and at a given ``location'') 
under some realistic dynamic constraints that both limit the
possibility of certain creatives 
and limit the number of impressions 
in certain locations and/or moments in time.
A location can represent a place where an advertisement can be displayed.
It could also include information on the user or user's category
(it could be a combination of a location and user category,
location and a user, location and a set of keywords inserted into the
web-page, etc\dots).
Hence the model we are considering can be used to target users
by their categories.
Similar optimization problems have already been treated in the
scientific literature (\cite{unintru}, \cite{abenakamura}, \cite{nakamura}).

We are given certain 
``creatives'' (advertisements to be displayed, e.g.\ banners, videos, etc.),
``campaigns'' (sets of related creatives),
certain ``locations'' and a period of time (set of ``time frames''). 
At a given moment in time we have an expected profit
for each creative of a given campaign in a 
given time frame and location.
Usually the profit generated by 1000 impressions, called ``eCPM'' 
(effective cost per mille) is considered
because the profit of one impression is very small.
For the purpose of this paper 
we will only consider the profit of a single impression.

The profit of the web-page's owner depends on the 
profit-generating events
that have been agreed upon by the advertiser and the
web-page's owner.
These events can be the impression itself,
a click on the advertisement or a registration of any sort
(e.g.\ registration into the advertised site, 
purchase of the advertised item, etc\dots), or any
combinations of these events.

We denote the expected profit of a single ``impression''
as the ``impression profit''.
The impression profit is therefore the sum of the
profits obtained by all the profit-generating events
such as impressions themselves, 
their ensuing clicks and registrations of all types (``steps'').
In such a way we can avoid keeping track of click-through
rates and different registration rates.
This choice is a compromise between performance and
generality,  
since it makes our model less precise and slightly less general:
we are not considering campaigns with separate budgets 
for different events;
we cannot estimate the expected profit of an impression 
as precisely as when different rates for different events are
considered.

The number of impressions (``supply'') on a given location
at a given time is limited by the traffic of the corresponding
webpage. It also depends on time
in a way that can be only partially predicted.
Moreover the maximum profit for a given campaign (``demand'') 
could be limited by a predefined budget.

Our primary goal is to 
maximize our expected revenue which is given by 
the expected total price paid.
A secondary goal is to maximize the profit of a single impression,
i.e.\ obtain the maximum revenue with the 
minimum effort (minimum number of
impressions). 
Therefore we wish to maximize 
a weighted sum of all all expected profits obtained 
in all locations in the period of time under consideration.
An additional goal which is considered in
the constraints is to maximize the visibility of the campaigns.

Taking into account supply and demand constraints makes our model
a special instance of a ``transportation problem''
for which very efficient solutions exist (see \cite{dantzig} and \cite{transshipment}).

The complexity of the model brings up the additional problem of 
deciding between simplifying the model 
and considering smaller problems, i.e.\ optimizing more locally.

In order to apply our optimization we need to make a
projection of the future supply and 
a projection of the impression profits
onto our period.
Impressions are only possible on the points allowed by the scheduling
of the campaigns.

The projection of the impression profits 
should also try to ``guess'' how the profit of an impression
changes in time.

The projection algorithms should take into account different
periodicities (daily, weekly, etc\dots).
More precise projections can be achieved through machine learning
techniques in that the weights of different periodicities
are computed during a training phase, possibly with different
strategies for different nodes or other features (time of the
day, day of the week, etc\dots).

Moreover we cannot
assume the immutability of the constraints of the problem 
in the period of time under consideration.
For this reason we have to decide how globally or locally we want to
optimize the problem and we have to continuously readjust to 
new constraints, new expected impression profits and
new expected supply.

\section{Notation}
We denote by $C_i$ the $i$-th campaign (set of creatives) and
by $B_{i,j}$ its $j$-th creative, 
by $L_l$ the $l$-th location,
by $T_k$ the $k$-th time frame.
We denote by $x_{i,j,k,l}$ the (``impression value'') number of impressions  
of $B_{i,j}$ at time frame $T_k$ and at location $L_l$.

\noindent We denote by $p_{i,j,k,l}$ 
the ``impression profit'', the profit of one single impression of creative 
$B_{i,j,k,l}$ at position $L_l$ and a time $T_k$.

\noindent For the sake of simplicity we will be omitting indices 
in our expression whenever our
objects do not depend on them.

\section{Configurations}

We can consider our problem as the problem
of finding the optimal impression values for the entries
in a tridimensional matrix, i.e.\  
for all points in a tridimensional discrete finite space
given by a grid defined by
couples (campaign, creative), 
time and location.

We refer to a single point in this tridimensional
discrete finite space
as an ``impression-event'' (or simply an ``impression''
when this is clear from the context). 
We call any choice for the values
of all the impression-events as a ``configuration''.
An impression is in fact
characterized by a couple (campaign, creative),
location and time. Our goal is to choose
the optimal delivery of each possible
impression, i.e.\ an optimal configuration. 
We will simply refer to the the number of impressions
of an impression-event as
the ``impression value'' or its ``value''.

Moreover some of our constraints restrict
the possible points in such a grid.
We can see these restriction as an 
additional trivial constraints
of the form $x_{i,j,k,l}=0$
on such points. 
We refer to the points
that do not contradict any constraints as
``possible points'', points that are not
forced to have a zero value by the constraints. 
A subset of the possible points is the set
of points that are allowed by the schedule
of the campaigns. We refer to such points as to
``admissible points''.

Each admissible point in this space describes a dimension
of our optimization problem.
The worst case in our problem is produced when
all points inside the cube of size given
by the number of couples (campaign, creative),
the number of time frames, the number of locations, are admissible.
Our problem  lives in a space whose number of dimensions 
is given by the number of all admissible points.
Hence in the worst case the number of dimensions is  
the product of the number of couples (campaign,creative),
the number of locations and the number of time frames considered.

\begin{remark}
In practise we do not know the supply and cannot decide
in advance how many impressions of a given creative should
be delivered.
Thus we need to translate the number of impressions $x_{i,j,k,l}$
in terms of probability of delivery.
We transform a configuration into a
map that associates a couple $(k,l)$ with the probability
of delivery of $B_{i,j}$ for 
all possible corresponding couples $(i,j)$.
\end{remark}

Given a set of $t$-uples $C$,
we introduce the following
notation for subsets of $t-1$-uples:

\begin{align*}
C[i\rightarrow\alpha] = 
\{&(x_1,\dots, x_{i-1}, x_{i+1}, \dots, x_t) \\
& |(x_1,\dots, x_{i-1}, \alpha, x_{i+1},\dots,x_t) \in C \}; \\
C[i\rightarrow*] = 
\{&(x_1,\dots, x_{i-1}, x_{i+1}, \dots, x_t)  \\
& |\exists v |
(x_1,\dots, x_{i-1}, v, x_{i+1},\dots,x_t) \in C \}.
\end{align*}
\noindent i.e.\ we are considering 
respectively 
\begin{itemize}
\item $(t-1)$-uples obtained from 
$t$-uples in $C$, where $i$-th component is $\alpha$, 
in which the $i$-th component has been removed, and 
\item $(t-1)$-uples obtained from 
$t$-uples in $C$ where the $i$-th component has been
removed independently of its value.
\end{itemize}

In the same way, if more components are removed in parallel,
we introduce the notation:
$C[i_1\rightarrow \alpha_1,\dots,  i_n\rightarrow \alpha_n]$
for $(t-n)$-uples, where
$\alpha_j \in \mathbb{N}\cup\{*\}$ and
$i_j \in \mathbb{N}$ for $j \in \{1,\dots, n \}$.

\smallskip

Pictorially,
we could see a single configuration
$C=(x_{i,j,k,l})_{i,j,k,l}$ as a tridimensional matrix:

\bigskip
 
\setlength{\unitlength}{0.38mm}
    \begin{picture}(130,100)(-100,-10)
      \thicklines
      \put(0,0){\vector(1,0){120}}
      \multiput(0,0)(10,0){11}{\line(0,-1){1}}
      \put(90,-10){location}

      \put(0,0){\vector(0,1){90}}
      \multiput(0,0)(0,10){9}{\line(-1,0){1}}
      \put(-10,93){time frame}
      
      \linethickness{1.6pt}
      \put(0,0){\vector(-1,-1){60}}
      \multiput(0,0)(-10,-10){6}{\line(-1,0){1}}
      \put(-60,-70){(campaign, creative)}
      
      \thinlines
      \linethickness{0.2pt}
      \put(-14.5,-22.5){$(i,j)$}
      \put(-20.5,-20.5){\line(0,1){40}}
      \put(-20.5,19.5){\line(1,1){21}}
      \put(4,40){$k$}
      \put(-20.5,19.5){\line(1,0){40}}
      \put(22,15){{\large $x_{i,j,k,l}$}}      
      \put(20,19){\circle{1.39}}
      \put(20,19.5){\line(1,1){20}}
      \put(40,40){\line(0,-1){41}}
      \put(40,-11){$l$}

      \put(-50,-50){\line(0,1){80}}
      \put(-50,-50){\line(1,0){100}}
      \put(-50,30){\line(1,1){50}}
      \put(-50,30){\line(1,0){100}}

      \put(50,30){\line(1,1){50}}
      \put(50,-50){\line(1,1){50}}
      
      \put(50,-50){\line(0,1){80}}

      \put(0,80){\line(1,0){100}}
      \put(100,80){\line(0,-1){80}}      

    \end{picture}

\vspace{3.0cm}

\section{Realistic Model}

We want to consider a realistic model in which 
several constraints of different nature are taken into account.

\subsection{The Constraints}

We distinguish between the primary (physical)
constraints of the problem, the
secondary ones (commercial and optional) and 
the learning constraints (required by the learning
phase if it is included in the mathematical model).

\subsection{Primary constraints}
The primary constraints 
are given by the desired scheduling of the campaigns,
(the impossibility of certain creatives
at certain times and locations),
by the limited supply of impressions 
and by a (possibly) limited demand (campaign's budget):

\begin{enumerate}
\item\label{impossibleCon} 
The scheduling of the campaigns limits the admissible points:
certain creatives $B_{i,j}$ are only possible at 
certain time frames and locations. 
Typically a campaign (set of creatives) begins and ends at certain times
and its creatives are limited to certain locations, hours of the day,
days of the week, etc\dots
\item\label{supplyCon} Any location at a given time receives a limited supply 
of impressions (``location supply''), which solely depends on the traffic
of its page (more precisely on the portion of traffic given to the
ad-serving optimizer);
\item\label{demandCon} For any given campaign a 
given total profit may not be exceeded (``campaign demand'')
because only a finite campaign budget can be available.
\end{enumerate}

\begin{remark}
A campaign can have an unbounded budget, e.g.\ 
a campaign that only pays the client for an
actual purchase (highest step of a registration).
\end{remark}

\noindent For the sake of simplicity from now on we will assume implicitly 
that the first primary constraint (\ref{impossibleCon}) 
is always satisfied.

\subsection{Secondary constraints}
The secondary constraints may be of a commercial nature
and depend on the conditions in the contract between
the web-site's owner, the advertiser and possibly the
ad-serving company.
These constraints could be enforced, up to a certain extent, 
at real time while monitoring the delivery,
although having them as constraints is better for the 
optimality of the solution.

They are necessary to
increase the visibility of a certain campaign/creative:

\begin{enumerate}

\item\label{lastingCon} Any given creative/campaign 
should not last less than a given
period, e.g.\ the period in which the campaign is scheduled. 
We enforce this by setting a minimum for the number of 
impressions for each possible time frame.
\item\label{overflowCon} We would like to avoid having only 
one creative at a given location and time frame when more
than one choice is available.
\end{enumerate}

\subsection{Learning constraints}
Moreover, if the learning phase on the performance of new
creatives and new locations is to be included in the mathematical
model, additional constraints should be considered.
One way to force the system to learn on new creatives and
new nodes can be a constraint of this form:
for each new couple $(creative, node)$ we must have a minimum
number of impressions in all (or some initial) possible time frames.

\subsection{The Goal}
We want to maximize our expected revenue, which is
given by the total profit received
in a given configuration.

\section{A linear programming model}
Under the false but nevertheless mild
hypothesis that the impression profit
is constant with respect to its value      
we can assume that our
constraints are linear.
This assumption is not true in general
because 
there is no linear dependence between the total
profit generated by an impression-event and 
an impression value, i.e.\ 
displaying the same advertisement $x$ times 
on the same node,
possibly more than once to the same user,
does not necessarily produce $x$ times
the profit produced by one single display.

Since we are ultimately interested in 
the probability of delivery and since
integer linear programming is
computationally unfeasible (NP-hard),
a possible approach to this problem 
could be real linear programming:
we approximate our discrete problem
with a continuous one and we do not mind
considering a real number of impressions.

\subsection{Formalized constraints}
The points that do not contradict the first primary
constraint will form the unknowns
of our model.
In such a way we can avoid to include inequalities
for  the first primary constraint.

\subsubsection{Primary constraints}
We do not include the first primary constraints for the
reasons given above and assume that 
in our expressions all indices run over points
that do not contradict the first primary constraints.

Supply and demand are formalized as follows:

\noindent Second primary constraint:
\begin{equation}
\forall_{l,k} \; \sum_{i,j} x_{i,j,k,l} \leq 
\mathcal{S}_{l,k}; \; \; \text{(Supply)}\label{supplyIneq} 
\end{equation}
\noindent where
$\mathcal{S}_{l,k}$ is the supply at location $L_l$
and at time $T_k$.

\noindent Third primary constraint:
\begin{equation}
\forall_{i} \; \sum_{j,l,k} p_{i,j,k,l} x_{i,j,k,l} \leq 
\mathcal{D}_{i}; \; \; \text{(Demand)}\label{demandIneq}
\end{equation}
\noindent where $\mathcal{D}_{i}$ 
is the budget of the $i$-th campaign.

\subsubsection{Secondary constraints}
The secondary constraints are formalized as follows:

\noindent First secondary constraint:
\begin{equation}
\forall_{i,k} \; \sum_{j,l} x_{i,j,k,l} \geq 
\mathcal{\mu}_{i,k}; \; \; \text{(lasting)} \label{lastingIneq}
\end{equation}
\noindent where
$\mathcal{\mu}_{i,k}$ is the desired minimum 
delivery of
impressions of the $i$-th campaign at time $T_k$.

\noindent Second secondary constraint:
\begin{equation}
\forall_{l,k\in \mathcal{D}} \forall_{i,j} \;  x_{i,j,k,l} \leq P_{l,k} \cdot 
\mathcal{S}_{l,k}; \; \; \text{(no overflow)}\label{overflowIneq}
\end{equation}
\noindent 
where $P_{l,k} \in [0,1]$ (usually close to $1$)
defines how much a single creative can
occupy a location at a given time frame
and where $\mathcal{D}$ is the set of indices 
corresponding to locations and time frames
where at least $2$ different creatives are
possible.
\begin{remark}
The second secondary constraints (\ref{overflowIneq})
should only be limited to those cases in which
at a given location and time frame more than one
pair of campaign and creative is possible because
otherwise the constraint in (\ref{overflowIneq})
would prevent the node from being filled with impressions
even when this could be possible.
\end{remark}

\subsubsection{Learning constraints}
If the learning phase is included in the model 
a constraint should force the new creatives
and new locations to have a minimum number of impressions:
\begin{equation}
\forall_{\text{new } j,l} \, \forall_{i,k} \, x_{i,j,k,l} 
\geq \lambda_{i,j,k,l}.\label{learnIneq}
\end{equation}

We are also implicitly assuming that the unknowns are non-negative, 
i.e.\ 
\begin{equation}\label{positiveCon}
\forall_{i,j,k,l} \; x_{i,j,k,l} \geq 0.
\end{equation}

\section{Existence of a solution}
We see that there is no guarantee of consistency 
once the secondary and learning constraints are introduced,
even if we exclude the first primary constraints.
In general we need to solve the system of inequalities 
in order to make sure that there is indeed a solution.
Nevertheless if we carefully choose $\lambda_{i,j,k,l}$ and
$\mu_{i,k}$ we can avoid that the first secondary or learning
constraints contradict the second and third primary constraints.

In particular we have the following facts

\begin{fact}\label{lastingFact}
If we choose
\begin{equation}
\begin{split}
\mu_{i,k} \leq \min \{
& \frac{D_i}{|C[1\rightarrow i,2\rightarrow *,4\rightarrow *]| M},  \\
& \min_{t \in C[1\rightarrow i, 2\rightarrow *, 3\rightarrow k]}
\frac{S_{k,t}}{|C[2\rightarrow *, 3\rightarrow k, 4\rightarrow t]|}
  \} 
\end{split}
\end{equation}
\noindent then 
the semi-algebraic set defined by the inequalities
 (\ref{supplyIneq}), (\ref{demandIneq}), (\ref{lastingIneq})
and (\ref{positiveCon}) is not empty.
\end{fact}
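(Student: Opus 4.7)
The plan is to exhibit an explicit feasible point rather than argue abstractly. For each campaign--time pair $(i,k)$ with $\mu_{i,k}>0$, choose one admissible creative--location pair $(j^*_{i,k},l^*_{i,k})\in C[1\rightarrow i,3\rightarrow k]$; if this set were empty the lasting constraint (\ref{lastingIneq}) would be vacuous, so there is nothing to prove there. Define $x^*_{i,j^*_{i,k},k,l^*_{i,k}}:=\mu_{i,k}$ and set every other entry of $x^*$ to zero. Positivity (\ref{positiveCon}) and the lasting inequality (\ref{lastingIneq}) then hold trivially, the latter with equality.

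For the supply inequality (\ref{supplyIneq}) at a fixed $(l,k)$, the left-hand side of $x^*$ collapses to $\sum_{i:\,l^*_{i,k}=l}\mu_{i,k}$. Every index $i$ contributing to this sum satisfies $(i,j^*_{i,k},k,l)\in C$, so the number of contributing campaigns is at most $|C[2\rightarrow *,3\rightarrow k,4\rightarrow l]|$. Since $l\in C[1\rightarrow i,2\rightarrow *,3\rightarrow k]$ for each such $i$, the hypothesised minimum on $\mu_{i,k}$ in particular gives $\mu_{i,k}\leq S_{k,l}/|C[2\rightarrow *,3\rightarrow k,4\rightarrow l]|$, and summing yields the required bound $\mathcal{S}_{l,k}$.

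For the demand inequality (\ref{demandIneq}) at a fixed $i$, the left-hand side of $x^*$ collapses to $\sum_k p_{i,j^*_{i,k},k,l^*_{i,k}}\mu_{i,k}\leq M\sum_k\mu_{i,k}$, where $M$ is read as a uniform upper bound on the impression profits and the $k$-sum ranges over the time frames admissible for campaign $i$, a set of cardinality $|C[1\rightarrow i,2\rightarrow *,4\rightarrow *]|$. Applying the first term of the hypothesised minimum to each $\mu_{i,k}$ then produces a value at most $D_i$, closing the argument.

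I do not anticipate a genuine obstacle: no convex-analytic or duality argument is required. The real work is purely combinatorial bookkeeping -- translating the quantifiers compressed into the projection notation $C[\cdot]$ into concrete counting bounds on the specific sums appearing in the supply and demand inequalities, and verifying that a representative $(j^*_{i,k},l^*_{i,k})$ exists whenever $\mu_{i,k}$ actually imposes a nontrivial constraint.
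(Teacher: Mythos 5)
Your proposal is correct, and it rests on the same two counting bounds as the paper's proof: each $\mu_{i,k}$ is at most $D_i/(|C[1\rightarrow i,2\rightarrow *,4\rightarrow *]|\,M)$, which controls the demand sum over the admissible time frames of campaign $i$, and at most $S_{k,l}/|C[2\rightarrow *,3\rightarrow k,4\rightarrow l]|$, which controls the supply sum over the campaigns present at $(k,l)$. The difference is in how the feasible point is produced. The paper argues abstractly that one "can take" a configuration whose partial sums $\sum_{j,l}x_{i,j,k,l}$ sit between $\mu_{i,k}$ and the two bounds, and its chain of inequalities for the supply constraint introduces spurious summation indices and what appears to be a typo ($C[2\rightarrow *,3\rightarrow k,4\rightarrow k]$); you instead exhibit an explicit witness by concentrating the entire mass $\mu_{i,k}$ on a single admissible pair $(j^*_{i,k},l^*_{i,k})$, which makes the verification of (\ref{supplyIneq}) and (\ref{demandIneq}) a clean counting exercise and is the sharper write-up. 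One caveat you should make explicit rather than wave at: when $C[1\rightarrow i,3\rightarrow k]=\emptyset$ the lasting constraint reads $0\geq\mu_{i,k}$, so it is not automatically vacuous unless one also stipulates $\mu_{i,k}=0$ there (the hypothesised bound does not force this, since the inner $\min$ is then over an empty set); the paper glosses over the same point, so this is a shared, minor gap rather than a flaw specific to your argument. You should also state, as the paper fails to do, that $M$ denotes an upper bound on the impression profits $p_{i,j,k,l}$.
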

\begin{proof}
The second secondary constraint (\ref{lastingIneq}) and our choice allow to have
\begin{equation}
\sum_{j,l} x_{i,j,k,l} 
\leq \frac{D_i}{|C[1\rightarrow i,2\rightarrow *,4\rightarrow *]| M}.
\end{equation}
Therefore for any campaign $C_i$ we have
\begin{equation}
\begin{split}
& \sum_{j,k,l} p_{i,j,k,l} x_{i,j,k,l} \leq \\
& \leq M \sum_{j,k,l}  x_{i,j,k,l} \leq M \sum_{k \in C[1\rightarrow i,2\rightarrow *,4\rightarrow *]} \sum_{j,l} x_{i,j,k,l} \leq \\
& \leq M \sum_{k \in C[1\rightarrow i,2\rightarrow *,4\rightarrow *]} 
\frac{D_i}{|C[1\rightarrow i, 2\rightarrow *, 4\rightarrow *]| M} = D_i.
\end{split}
\end{equation}
Moreover we can take
\begin{equation}
\sum_{j,l} x_{i,j,k,l} \leq  
\min_{t \in C[1\rightarrow i, 2\rightarrow *, 3\rightarrow k]}
\frac{S_{k,t}}{|C[2\rightarrow *, 3\rightarrow k, 4\rightarrow t]|}.
\end{equation}
Hence for any couple $(k,l)$ of location and time frame
we have
\begin{equation}
\begin{split}
\sum_{i,j} x_{i,j,k,l} & \leq \sum_s \sum_{i,j} x_{i,j,k,s} \leq 
\sum_i \sum_{k,s} x_{i,j,k,s} \leq \\
& \leq \sum_i \min_{t \in C[1\rightarrow i, 2\rightarrow *, 3\rightarrow k]}
\frac{S_{k,t}}{|C[1\rightarrow i, 2\rightarrow *, 3\rightarrow t]|} \leq \\
& \leq \sum_i \frac{S_{k,l}}{|C[2\rightarrow *, 3\rightarrow k, 4\rightarrow k]|}
= S_{k,l}.
\end{split}
\end{equation}
\end{proof}

\begin{fact}\label{learningFact}
If we choose
\begin{equation}
\lambda_{i,j,k,l} \leq \min \{
  \frac{D_i}{|C[1\rightarrow i]|M}, 
  \frac{S_{k,l}}{|C[3\rightarrow k,4\rightarrow l]|}
  \}
\end{equation}
\noindent then 
the semi-algebraic set defined by the inequalities
(\ref{supplyIneq}), (\ref{demandIneq}), (\ref{overflowIneq})
(\ref{positiveCon}) is not empty.
\end{fact}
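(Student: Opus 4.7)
The plan is to mimic the structure of the proof of Fact~\ref{lastingFact}: exhibit an explicit configuration that lies in the semi-algebraic set, then verify each inequality one by one. The natural candidate, since $\lambda$ already appears in the hypothesis, is to take $x_{i,j,k,l} := \lambda_{i,j,k,l}$ at every admissible point; this automatically satisfies the (implicit) learning constraint~(\ref{learnIneq}) with equality and, being the minimum of non-negative quantities, also satisfies~(\ref{positiveCon}).

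For the demand inequality~(\ref{demandIneq}), I would fix a campaign index $i$, bound each $p_{i,j,k,l}$ by the uniform maximum $M$ used in Fact~\ref{lastingFact}, and sum $\lambda_{i,j,k,l} \leq D_i/(|C[1\rightarrow i]|M)$ over the $|C[1\rightarrow i]|$ admissible triples $(j,k,l)$; the total telescopes to at most $D_i$. The supply inequality~(\ref{supplyIneq}) is analogous: for fixed $(k,l)$, the inner sum $\sum_{i,j}$ ranges over $|C[3\rightarrow k,4\rightarrow l]|$ admissible pairs, and the term-by-term bound $\lambda_{i,j,k,l} \leq S_{k,l}/|C[3\rightarrow k,4\rightarrow l]|$ collapses the sum to at most $S_{k,l}$. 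Both of these are essentially one-line bookkeeping computations.

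The step I expect to be the main obstacle is the overflow inequality~(\ref{overflowIneq}), which is imposed only on $(l,k)\in\mathcal{D}$. At such a point the supply-based bound gives $x_{i,j,k,l} \leq S_{k,l}/|C[3\rightarrow k,4\rightarrow l]|$, and by the very definition of $\mathcal{D}$ the denominator is at least $2$; hence $x_{i,j,k,l} \leq S_{k,l}/2$, which lies below $P_{l,k}\,S_{l,k}$ as soon as $P_{l,k}\geq 1/2$, consistently with the remark that $P_{l,k}$ is usually close to $1$. A fully general statement would either strengthen the hypothesis by a factor of $P_{l,k}$ or assume $P_{l,k}\geq 1/|C[3\rightarrow k,4\rightarrow l]|$; isolating this compatibility condition between $P_{l,k}$ and the branching of $C$ at $(k,l)$ is really the only delicate part of the argument, everything else being a direct transcription of the bounds in the hypothesis.
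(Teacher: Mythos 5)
Your proof is correct and follows essentially the same route as the paper's: take $x_{i,j,k,l}=\lambda_{i,j,k,l}$ (the paper phrases this as ``our choice allows to have $x_{i,j,k,l}\leq\dots$''), then verify the demand inequality by bounding each $p_{i,j,k,l}$ by $M$ and summing the first bound over the $|C[1\rightarrow i]|$ admissible triples, and the supply inequality by summing the second bound over the $|C[3\rightarrow k,4\rightarrow l]|$ admissible pairs; non-negativity is immediate. The one place you diverge is the no-overflow inequality~(\ref{overflowIneq}): the paper's own proof never touches it, because the citation of (\ref{overflowIneq}) in the statement is evidently a typo for the learning constraint~(\ref{learnIneq}) --- the proof text explicitly calls the referenced inequality ``the learning constraint'', and $\lambda_{i,j,k,l}$ occurs only in (\ref{learnIneq}). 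Read with that correction, your first two paragraphs already constitute the full intended proof. Your further observation that genuinely imposing (\ref{overflowIneq}) would require the compatibility condition $P_{l,k}\geq 1/|C[3\rightarrow k,4\rightarrow l]|$ (hence it suffices that $P_{l,k}\geq 1/2$ on $\mathcal{D}$) is correct, and it identifies an omission in the paper rather than a gap in your own argument.
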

\begin{proof}
The learning constraint (\ref{overflowIneq}) 
and our choice allow to have
\begin{equation}
x_{i,j,k,l} \leq  \frac{D_i}{|C[1\rightarrow i]|M}.
\end{equation}
\noindent from which it follows that for any campaign $C_i$:
\begin{equation}
\begin{split}
\sum_{j,k,l} p_{i,j,k,l} x_{i,j,k,l}  &\leq M \sum_{j,k,l} x_{i,j,k,l} \leq \\
& \leq M \sum_{j,k,l} \frac{D_i}{|C[1\rightarrow i]|M} = D_i.
\end{split}
\end{equation}
Moreover we can also take
\begin{equation}
x_{i,j,k,l} \leq \frac{S_{k,l}}{|C[3\rightarrow k,4\rightarrow l]|}.
\end{equation}
\noindent from which it follows that for any
couple $(l,k)$ of location and time we have
\begin{equation}
\sum_{i,j} x_{i,j,k,l} \leq 
\sum_{i,j} \frac{S_{k,l}}{|C[3\rightarrow k,4\rightarrow l]|} = S_{k,l}.
\end{equation}
\end{proof}

\begin{theorem}
The system of inequalities formed by 
second and third primary constraints,
first secondary constraints 
and learning constraints has at least a solution.
\end{theorem}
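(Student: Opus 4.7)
The plan is to combine Facts~\ref{lastingFact} and~\ref{learningFact} by linear superposition. First I would strengthen their hypotheses by a factor of two, requiring $\mu_{i,k}$ to be at most half the bound appearing in Fact~\ref{lastingFact} and $\lambda_{i,j,k,l}$ to be at most half the bound appearing in Fact~\ref{learningFact}. Since every estimate in both proofs chains through sums of pointwise bounds that are themselves proportional to $\mu$ or $\lambda$, this rescaling produces a configuration $x^{(1)}\geq 0$ satisfying (\ref{lastingIneq}) while consuming at most $S_{l,k}/2$ of each supply and at most $D_i/2$ of each demand budget, and analogously a configuration $x^{(2)}\geq 0$ satisfying $x^{(2)}_{i,j,k,l}\geq\lambda_{i,j,k,l}$ on all ``new'' indices while using at most half of each supply and demand budget.

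Next I would set $x := x^{(1)} + x^{(2)}$ and verify all four constraint families at once. Non-negativity is immediate. The lower bounds (\ref{lastingIneq}) and (\ref{learnIneq}) are inherited entry-wise from $x^{(1)}$ and $x^{(2)}$, because $x \geq x^{(1)}$ and $x \geq x^{(2)}$ componentwise. The supply budgets add:
\[
\sum_{i,j} x_{i,j,k,l} \;\leq\; \tfrac{1}{2}S_{l,k} + \tfrac{1}{2}S_{l,k} \;=\; S_{l,k},
\]
and analogously $\sum_{j,k,l} p_{i,j,k,l} x_{i,j,k,l} \leq D_i/2 + D_i/2 = D_i$, recovering (\ref{supplyIneq}) and (\ref{demandIneq}).

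The only subtle point is checking that the bounds derived in the two facts really do scale linearly when $\mu$ and $\lambda$ are halved; inspection of both proofs shows this is the case, since each supply/demand upper bound is built by summing pointwise contributions that are themselves linear in the respective parameter. The main obstacle is therefore bookkeeping rather than a new argument — in particular, one must make sure that the index sets $C[1\!\to\! i,2\!\to\!*,4\!\to\!*]$ and the ``new'' index set in the learning constraint interact cleanly with the supply and demand sums when the two configurations are added, but once the rescaling is in place this is purely mechanical.
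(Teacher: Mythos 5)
Your proposal is correct, and it is in fact substantially more of a proof than the paper gives: the paper's entire argument for this theorem is the single sentence ``This follows from Fact~\ref{lastingFact} and Fact~\ref{learningFact}.'' Read literally, that is a gap, because the two facts certify the consistency of two \emph{different} subsystems (supply, demand, nonnegativity plus the lasting constraint in one case, plus the learning constraint in the other), and two individually consistent systems sharing the same supply and demand budgets need not be jointly consistent --- a witness for one may already exhaust the budget needed by the other. Your superposition argument addresses exactly this: halving the admissible $\mu_{i,k}$ and $\lambda_{i,j,k,l}$ so that each witness consumes at most half of every $S_{l,k}$ and $D_i$, then adding the two configurations, correctly inherits the lower bounds (\ref{lastingIneq}) and (\ref{learnIneq}) componentwise while the upper bounds (\ref{supplyIneq}) and (\ref{demandIneq}) add up to the full budgets; and your observation that every estimate in both facts is linear in the respective parameter is what makes the rescaling legitimate. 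The price is a factor of $2$ in the hypotheses on $\mu$ and $\lambda$ relative to the bounds stated in the facts, which is harmless since the theorem states no explicit hypotheses on these parameters at all (it is implicitly quantified over suitable choices). So where the paper merely asserts the combination, you supply the missing interaction argument; the only cosmetic point worth flagging is that Fact~\ref{learningFact} as printed cites inequality (\ref{overflowIneq}) where its proof clearly concerns the learning constraint (\ref{learnIneq}), and your reading of it as the learning fact is the intended one.
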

\begin{proof}
This follows from Fact~\ref{lastingFact} and
Fact~\ref{learningFact}.
\end{proof}

\begin{remark}
A slightly simpler model
in which only primary constraints (including the first ones)
are considered is an instance of the 
``Hitchcock's style transportation problem'' \cite{hitchcock}.
\end{remark}

\subsection{The objective function}
We want to maximize our expected revenue, which is
given by the sum of all expected profits received
in a given configuration.

Hence we can estimate the
revenue by taking the weighted sum of the
expected revenue in a given configuration $C$.
Such a sum will be our ``objective function'':
\begin{equation}
F(c) = \sum_{i,j,k,l} p_{i,j,k,l} \; x_{i,j,k,l}.
\end{equation}
\noindent where $p_{i,j,k,l} \; x_{i,j,k,l}$ is
the expected profit generated by $B_{i,j}$
at location $L_l$ and at time $T_k$.

\section{Simplified models: transportation problems}
The standard algorithm for solving linear programming
problems is the well known simplex algorithm.
The simplex algorithm, although fast in many practical
situations, is exponential in the worst case
and does not scale well enough for the size of
problems we want to consider.
Moreover no general linear programming algorithm
is known to be strongly polynomial.

An important speed-up can be achieved by simplifying the
model we have considered.
In particular we could consider a simpler model that
could fall into the category of ``transportation problems''.
For such problems very efficient algorithms are known
such as the ``stepping stone algorithm'' 
(see \cite{dantzig} and \cite{transshipment})

The classical transportation problem is a linear programming
problem whose constraints describe demands $d_i$ 
to be met and supplies $s_j$ to  be delivered:
\begin{equation}
\forall_j \; \sum_i x_{i,j} = s_j; \; \; \;
\forall_i \sum_j x_{i,j} = d_i.
\end{equation}

We say that a transportation problem is
``balanced'' if the total demand equals the total supply.
Taking into consideration only balanced problems is not
a real restriction because one can always put oneself
into this case by adding an extra dummy supply or
extra dummy demand with zero cost/gain.

In our case demands could describe a required number
of creatives and supplies a the number of
impressions that can be shown in given locations
and time frames.

A more general transportation problem 
goes under the name of ``Hitchcock's style transportation
problem'' (\cite{hitchcock}) that better approximates
our problem, in that the demand is obtained by multiplying
the impression values by a factor (representing their profit). 
The constraints are of the following kind:
\begin{eqnarray*}
\forall_j \; \sum_i x_{i,j} & = s_j; \;  \\
\forall_i \; \sum_j c_{i,j} x_{i,j} & = d_i 
\end{eqnarray*}
\noindent where $c_{i,j}$ is the value (cost or profit)
associated to $x_{i,j}$.

\section{Maximizing the value of impressions}
A related problem to the one of
maximizing the revenue is that
the problem of minimizing the 
number of used impressions under
some constraints on the revenue
generated by each campaign.
This is equivalent to the problem of maximizing
the profit generated by a single impression.

In other words we want to maximize
the revenue and secondly minimize
the number of impressions such that
the maximum revenue is achieved, i.e.\
we want to maximize the average profit of impressions,
provided that the maximum revenue is achieved.

If we exclude secondary and learning constraints, 
this can be formalized as
a special instance of the ``Hitchcock's style transportation problem''
(see \cite{hitchcock}).

Given the constraints
\begin{eqnarray}
\forall_{k,l} & \sum_{i,j}x_{i,j,k,l}\leq s_{k,l} \\
\forall_{i}   & \sum_{j,k,l} p_{j,k,l} x_{i,j,k,l} = d_i 
\end{eqnarray}
\noindent we want to minimize the following
objective function
\begin{equation}
\sum_{i,j,k,l} x_{i,j,k,l}
\end{equation}
\noindent i.e.\ the total number of impressions.

\section{Projections}
In order to apply our optimization algorithms we need
to have at least a projection of the supply and 
a projection
of the expected profit of all impressions allowed by
the first primary constraint.
The supply and impression profits could be estimated by taking 
a proper weighted average from the historical data.
The projection should take into account different factors:
episodic factors and possibly different periodicities 
(daily, weekly, yearly, etc\dots).

\subsection{Projecting the profit}\label{sse:projProfit}
The profit of an impression-event may depend on 
the periodicity of its campaign and of its location.
Since the eCPM tends to change slowly in time, it can
be predicted better than the supply.
If we want to determine the expected profit for
an impression-event $(B_{i,j},T_k,L_l)$ we can take
some average profit from our historical data
on ``similar'' events.
Our strategy is to use the most accurate and recent
available information in the historical data.

A simplified version of our algorithm can be described
by the following procedure (more subcases are considered 
by the actual algorithm):
\begin{enumerate}
\item Try to find enough impression-events  
of the form $(B_{i,j},T,L_l)$ where $T$ is 
a similar time (possibly same day of the week and same hour)
starting from the closest dates first.
\item Try to find enough impression-events
of the form $(B,T,L_l)$ and $(B_{i,j},T,L)$ 
and take a weighted average of the two averages,
where $L$ is any location, $B$ is any creative,
$T$ is a similar time.
\item Try to do the same as in the previous step
but with $(B,T,L_l$) and $(B_{i,k},T,L)$ where
$B_{i,k}$ is a different creative belonging to the same campaign.
\item Try to find enough impression-events 
(only on same the campaign, only the same node, etc\dots).
\end{enumerate}

\subsection{Projecting the supply}
A model for the projection of the supply should take into 
consideration the periodicity of the location, 
i.e. some sites are more often visited in particular periods 
of the year, day, hours, etc\dots
More periodicities may concur, e.g.\ a site may be visited
more often in a specific day of the week 
and at a specific hour of the day, and may also have
an episodic surge in the number of visitors for a short
period for some unpredictable event.
A mathematical model that could describe the concurrent
effects of different periodicities could be that of
superimposing waves, where each wave describe a different
factor, e.g.\ a weekly factor and a contingent factor.

\subsubsection{Weighted average}
In many practical cases it is enough to consider a weighted average
of the supply in the previous two weeks and at similar
hours in a similar fashion as to procedure~\ref{sse:projProfit} used for
the projection of the profit.

\subsubsection{Machine learning}
Regression analysis through
machine-learning techniques such as support vector machines
can be a viable approach for the
problem of properly choosing the weights of the average
of the different ``features'' (e.g.\ periodicities).
Non-linear kernels could also be taken into considerations
if they perform significantly better for the data sets under
consideration.

\section{Dynamic and stochastic nature of our problem}
In reality this approach has a serious drawback:
we are making a \underline{very false} \underline{assumption}
because by applying linear programming 
we are assuming that in the period under consideration
the expected impression profits $p_{i,j,k,l}$,
the expected supplies $S_{l,k}$,
and our constraints do not change.
We are also erroneously assuming that $p_{i,j,k,l}$ is a constant
with respect to $x_{i,j,k,l}$.

\subsection{Non-linear and dynamic problem}
Our problem is in fact non-linear and dynamic. 
To make things worse its state depends on external factors that cannot 
always be forecast (e.g.\ new campaigns can come into play).
Hence we are forced to continuously readapt to the
new constraints. Thus we can expect a good performance 
if our expected impression profit and constraints
do not change too much in the period of time
under consideration.

\subsection{Learning phase}
An additional problem comes from the fact that
the system has to learn how new creatives and
new locations perform.
The corresponding theoretical problem goes under the name
of ``exploration-exploitation trade-off'', i.e.\ 
the challenge of deciding between learning how
some resources perform
versus exploiting the ones that have so far performed better.
In practice we need to decide between
displaying advertisements that
help the system learn about their performance
versus displaying
those that generate a more immediate reward.
This problem has been addressed in \cite{abenakamura}
where a technique based on the Gittins index has been used.

\subsection{How far into the future}
This also poses the problem of deciding how fast
we want to update our information and how
far in time we want our 
optimization to ``see'' our problem
(i.e. how globally we want to solve the
problem).
A global solution could be a very bad
one if the conditions of the problem
were to change too quickly.
We might want to give a different weight to
an expected profit far in time in order
to compensate for possible changes
and so limit the risk.

We must also take into consideration
the stochastic nature of our data.
For example we might use historical data to extract
the standard deviation for
the profit of the impression and use it
to better assess the risk.

A possibility could be to
have an adaptive or semi-automatic system 
in which
the time span given to the optimizer
is adaptively/manually adjusted when there is
a high probability of a significant change
in the constraints, e.g.\ a new campaign
is likely going to come into play, 
the estimation of profit of an impression is
not stable enough, etc\dots.

\section{Targeting users}
The approach we have so far presented optimizes the
delivery of advertisements in both space (nodes)
and time (time frames). It does not explicitly
take users' profiles into consideration.

Nevertheless the very same algorithms and code can
be used to take into account users' profiles
by encapsulating the profile information into 
the node information.
Therefore we should simply store a pair
$(node, profile)$ into a single ``extended node''.
The result would be that 
\begin{itemize}
\item the supply are projected
onto triples $(node, profile, time)$;
\item the eCPM and the delivery are computed for quintuples

$(campaign, creative, node, profile, time)$.
\end{itemize}

\noindent No modification of the code is necessary.

\section{Computational considerations}

The large number of unknowns and constraints in this general approach 
can pose a serious problem to its computable feasibility.

\subsection{Reducing dimensions and constraints}
We could reduce the number of dimensions
by clustering similar attributes, 
i.e.\ combinations of locations and time frames
(see \cite{abenakamura} for an approach to this problem)
or by simplifying our model. For example we could 
simplify our model as follows:
\begin{itemize}
\item We could restrict our optimization problem
to periods of time in which the time constraints
do not change. This greatly reduces the number of unknowns
but could also brings us to suboptimal solutions.
\item We could avoid considering the secondary and learning
constraints within the model and have them enforced
during the delivery.
\item We can use a time horizon, beyond which all 
the time frames are considered jointly.
\item We could use similar attributes as one single attribute,
e.g.\ an impression at location $\lambda_1$ at time
$t_1$ could perform similarly to a location $\lambda_2$
at some other time $t_2$.
\item Projecting the ``impression profits'' is a costly 
operation because of the sheer number of points to be considered.
This operation can be sped up
by assuming that similar points produce the same profit.
\end{itemize}

\subsection{Other optimization algorithms}
The simplex algorithm is not the only known efficient
algorithm for linear programming.
Interior-point algorithms provide a valid alternative
and are polynomial 
(more specifically ``weakly polynomial''\footnote{
An algorithm is ``strongly polynomial'' if and only if
\begin{itemize}
\item the number of operations in the arithmetic model of computation 
is bounded by a polynomial in the number of integers in the 
input instance; and
\item the space used by the algorithm is bounded by a polynomial in 
the size of the input.
\end{itemize}
An algorithm which runs in polynomial time but which is not 
strongly polynomial is said to run in weakly polynomial time.
The existence of a strongly polynomial algorithm for real
linear programming is still an open problem.
}.
Totally different approaches to the optimization
could be possible, e.g.\ gradient based, genetic, etc\dots
Unfortunately these other approaches are suboptimal because
they are intrinsically local.
Moreover they do not exploit the linear nature of the constraints
and of the objective function.

\subsection{Tuning the supply projection}
Properly projecting the supply from historical data can be
a hard task due to the fact that the available historical data 
might not correspond to the real traffic but only to a 
possible variable portion of the real traffic which is given to the
ad-server.
This problem might be impossible if no regularity is present
in the data. A machine learning approach may better tackle
it than taking a simple weighted average of selected profits
at some previous periods with constant weights.
For instance this is the case if an optimal method is non-linear.

\section{Benchmarking linear solvers for ad-serving problems}
In order to test the computational feasibility of
two of the main linear programming solvers on our
problems we have used the free and open-source 
\texttt{lp\_solve} \cite{lpsolve} 
and \texttt{glpk} \cite{glpk} libraries.
They both come as \texttt{C} libraries; both
have wrapper interfaces in higher level languages
such as \texttt{Java}.
The result was a clear win for \texttt{glpk} at least
for the type of problems under our consideration.
Other alternative libraries are
\texttt{bpmpd} \cite{bpmpd}, 
\texttt{soplex} \cite{soplex}, \texttt{pcx} \cite{pcx}.

\section{Software Implementation}
We have implemented an ad-server optimizer in 
both \texttt{C} and in \texttt{Java} (see \cite{adproblem}).
We have used the \texttt{glpk} library for 
solving the mathematical model and \texttt{libsvm}
(see \cite{libsvm}) for automatically learning 
how to project future supply (Internet traffic).

\subsection{Features}
Our implementation has the following features:
\begin{itemize}
\item{Constraints: } We consider in our model 
the three types of primary constraints 
and provide as option the secondary and learning constraints. 
\item{Projection of the profit: } 
We perform a projection of the impression profits
from historical data, by taking into account different
periodicities (e.g.\ hourly, daily). 
\item{Projection of the supply: } We perform a projection of the supply
from historical data, 
by both machine learning (support vector machine \texttt{libsvm} 
and/or 
by a fixed weighted average which both take into account 
different periodicities.
\item{Time horizon: } We can set a time horizon in order to reduce
the number of unknowns. 
\end{itemize}

\subsection{Extracting the data from the database}
Our implementation requires as input:
\begin{enumerate}
\item historical data necessary for projecting the
impression profits and the future supply,
\item campaign data (budgets for each campaign),
\item scheduling data (set of possible impression not contradicting the
first primary constraints).
\end{enumerate}

\subsection{Main steps of the optimization}
The algorithm can be roughly subdivided in the
following macro steps:
\begin{enumerate}
\item Historical data is read.
\item The past supply is extracted from the historical data.
\item The future impression profits are projected from the
historical data.
\item The future supply is projected from the past supply.
\item The mathematical model for the optimization problem is
constructed.
\item The model is solved.
\item The solution of the problem is translated in terms
of probabilities of delivery.
\item The delivery probability is only used on the very next
time frame.
\item This procedure is repeated on the next time frame.
\end{enumerate}

\section{Results on real data provided by Neodata}
Our prototype has been used on some real data used at Neodata
and has been compared against the results produced by the
optimizer currently used at Neodata, which uses a simple
greedy algorithm: 
\begin{enumerate}
\item after a learning phase;
\item if a campaign is achieving its
target at the current rate, nothing is done,
otherwise, the campaign is stopped in its
less profit-generating nodes.
\end{enumerate}

The data we used were the logs and schedules creatives
used by two clients of Neodata, which, we call $A$ and $B$.
We have considered the data of April 2010 for both
companies.
We must remark that the percentage of traffic that is
managed by Neodata, neither is the total traffic nor
is it a constant percentage of the traffic generated
by the sites under consideration.
This makes the problem of properly estimating the
supply much harder (or even impossible).

The prototype achieved the following results:
$A$ was optimized equally well by the current optimizer
and our prototype; whereas $B$ was optimized better by a large margin
(more then 20\%) by our code.
We do not know for sure why the data on $A$ 
are not optimized equally well.
Possible reasons are: there is no room for further improvement,
the data on the supply cannot be used for the projection
because it does no correspond to a constant percentage 
of the real traffic.

The data was used as follows: the initial portion of the month
(e.g.\ the first 20 days) were used for training the system,
i.e.\ projecting the supply (traffic) and the profits.
The remaining part of the month was used as a schedule and
was optimized.

\section{Conclusion}
Our prototype has shown that real data can be indeed
optimized better than what a greedy algorithm
does.
There are still some open issues: how
to correctly project the supply when the 
conditions of the problem change quickly and
the data does not correspond to a constant
percentage of the traffic.

\bibliographystyle{alpha}
\bibliography{ImpOptimizer}

\newcommand{\etalchar}[1]{$^{#1}$}

\end{document}